\documentclass[submission,copyright,creativecommons]{eptcs}

\usepackage{tikzfig}




%
















\tikzstyle{every picture}=[baseline=-0.25em,scale=0.6]
\tikzstyle{small box}=[rectangle,draw,fill=white,minimum width=3mm,minimum height=3mm]
\tikzstyle{empty diagram}=[draw=gray!40!white,dashed,shape=rectangle,minimum width=8mm,minimum height=8mm]

\tikzstyle{dotpic}=[scale=0.6]
\tikzstyle{directeds}=[every to/.style={directed}]
\tikzstyle{dot graph}=[shorten <=-0.1mm,shorten >=-0.1mm,scale=0.6]
\tikzstyle{digraph}=[-latex]
\tikzstyle{plot point}=[circle,fill=black,minimum width=2mm,inner sep=0]
\tikzstyle{string graph}=[scale=0.6]
\tikzstyle{sg directed}=[-stealth]
\tikzstyle{rewrite edge}=[-open triangle 45]
\tikzstyle{sg bold directed}=[-stealth,thick,shorten >=-1pt]
\tikzstyle{sg vertex}=[circle,minimum width=2.2mm,fill=white,draw=black,inner sep=0mm]
\tikzstyle{labelled sg vertex}=[circle,minimum width=7mm,fill=white,draw=black,inner sep=0mm]
\tikzstyle{sg grey vertex}=[sg vertex,fill=gray!30!white]
\tikzstyle{sg black vertex}=[sg vertex,fill=black]
\tikzstyle{sg bold vertex}=[circle,minimum width=2.2mm,fill=white,draw=black,very thick,inner sep=0mm]
\tikzstyle{sg wire vertex}=[circle,minimum width=1mm,fill=black,inner sep=0mm]

\tikzstyle{tick vertex}=[rectangle,fill=black,minimum height=1mm,minimum width=2.5mm,inner sep=0mm]


\tikzstyle{braceedge}=[decorate,decoration={brace,amplitude=2mm,raise=-1mm}]
\tikzstyle{small braceedge}=[decorate,decoration={brace,amplitude=1mm,raise=-1mm}]
\tikzstyle{left hook arrow}=[left hook-latex]
\tikzstyle{right hook arrow}=[right hook-latex]


\tikzstyle{dot}=[inner sep=0.7mm,minimum width=0pt,minimum height=0pt,fill=black,draw=black,shape=circle]
\tikzstyle{white dot}=[dot,fill=white]
\tikzstyle{alt white dot}=[white dot,label={[xshift=2.9mm,yshift=-0.1mm]left:$\cdot$}]
\tikzstyle{gray dot}=[dot,fill=gray!50]
\tikzstyle{box vertex}=[draw=black,rectangle]
\tikzstyle{whitebg}=[fill=white,inner sep=2pt]
\tikzstyle{graph state vertex}=[sg vertex,fill=black]

\tikzstyle{wide point}=[fill=white,draw=black,shape=isosceles triangle,shape border rotate=90,isosceles triangle stretches=true,inner sep=1pt,minimum width=1.5cm,minimum height=5mm]
\tikzstyle{wide copoint}=[fill=white,draw=black,shape=isosceles triangle,shape border rotate=-90,isosceles triangle stretches=true,inner sep=1pt,minimum width=1.5cm,minimum height=5mm]
\tikzstyle{symm}=[ultra thick,shorten <=-1mm,shorten >=-1mm]


\tikzstyle{square box}=[rectangle,fill=white,draw=black,minimum height=6mm,minimum width=6mm]
\tikzstyle{square gray box}=[rectangle,fill=gray!30,draw=black,minimum height=6mm,minimum width=6mm]
\tikzstyle{point}=[regular polygon,regular polygon sides=3,draw=black,scale=0.75,inner sep=-0.5pt,minimum width=7mm,fill=white]
\tikzstyle{copoint}=[point,regular polygon rotate=180,fill=white]
\tikzstyle{gray point}=[point,fill=gray!40!white]
\tikzstyle{gray copoint}=[copoint,fill=gray!40!white]


\tikzstyle{open graph}=[baseline=-0.25em]
\tikzstyle{greybg}=[background rectangle/.style={fill=black!5,draw=black!30,rounded corners=1ex}, show background rectangle]

\tikzstyle{edge point}=[circle,minimum width=1mm,fill=black,inner sep=0mm]
\tikzstyle{vertex point}=[circle,minimum width=2.2mm,fill=white,draw=black,inner sep=0mm]
\tikzstyle{gray vertex point}=[circle,minimum width=2.2mm,fill=gray!30!white,draw=black,inner sep=0mm]
\tikzstyle{edge label}=[inner sep=2pt, font=\small]
\tikzstyle{on edge label}=[fill=white, font=\footnotesize, inner sep=1 pt]

\newcommand{\edgearrow}{{\arrow[black]{>}}}
\newcommand{\edgetick}{{\arrow[black,scale=0.7,very thick]{|}}}
\tikzstyle{directed}=[postaction=decorate,decoration={markings, mark=at position 0.55 with \edgearrow}]
\tikzstyle{medium directed}=[postaction=decorate,decoration={markings, mark=at position 0.75 with \edgearrow}]
\tikzstyle{short directed}=[->]
\tikzstyle{halfedge}=[-)]
\tikzstyle{other halfedge}=[(-]
\tikzstyle{freeedge}=[(-)]
\tikzstyle{white edge}=[line width=5pt,white]
\tikzstyle{tick}=[postaction=decorate,decoration={markings, mark=at position 0.5 with \edgetick}]
\tikzstyle{small map edge}=[|-latex, gray!60!blue, shorten <=0.9mm, shorten >=0.5mm]
\tikzstyle{thick dashed edge}=[very thick,dashed,gray!40]
\tikzstyle{dashed edge}=[densely dotted,thick]
\tikzstyle{map edge}=[|-latex,very thick, gray!40, shorten <=1mm, shorten >=0.5mm]
\tikzstyle{tickedge}=[postaction=decorate,
  decoration={markings, mark=at position 0.5 with \edgetick}]
\tikzstyle{dirtickedge}=[postaction=decorate,
  decoration={markings, mark=at position 0.5 with \edgetick},
  decoration={markings, mark=at position 0.85 with \edgearrow}]
\tikzstyle{dirdoubletickedge}=[postaction=decorate,
  decoration={markings, mark=at position 0.4 with \edgetick},
  decoration={markings, mark=at position 0.6 with \edgetick},
  decoration={markings, mark=at position 0.85 with \edgearrow}]


\tikzstyle{arrs}=[-latex,font=\small,auto]
\tikzstyle{arrow plain}=[arrs]
\tikzstyle{arrow dashed}=[dashed,arrs]
\tikzstyle{arrow bold}=[very thick,arrs]
\tikzstyle{arrow hide}=[draw=white!0,-]
\tikzstyle{arrow reverse}=[latex-]
\tikzstyle{cdnode}=[]




\tikzstyle{cnot}=[fill=white,shape=circle,inner sep=-1.4pt]
\tikzstyle{bang box}=[draw=black,dashed,minimum height=12mm,minimum width=12mm,fill=gray!20]
\tikzstyle{wire label}=[font=\footnotesize, auto]

\usepackage{noncommgraph}

\usepackage{bussproofs}
\usepackage{amsmath}
\usepackage{amssymb}
\usepackage{stmaryrd}
\usepackage{amsthm}
\usepackage{xspace}
\usepackage{hyperref}
\usepackage{bussproofs}
\usepackage{forest}
\usepackage{graphicx}
\graphicspath{{figures/}}

\usepackage{algpseudocode}
\usepackage{algorithm}
\algblock[Switch]{Switch}{EndSwitch}
\algblock[Case]{Case}{EndCase}
\algtext*{EndCase}

\newtheorem{theorem}{Theorem}[section]
\newtheorem*{theorem*}{Theorem}

\newtheorem{corollary}[theorem]{Corollary}
\theoremstyle{definition}
\theoremstyle{definition}
\theoremstyle{definition}\newtheorem{definition}[theorem]{Definition}
\theoremstyle{definition}
\theoremstyle{definition}\newtheorem{remark}[theorem]{Remark}
\theoremstyle{definition}\newtheorem{conjecture}[theorem]{Conjecture}
\theoremstyle{definition}
\theoremstyle{definition}
\theoremstyle{definition}
\theoremstyle{definition}


\tikzstyle{cdiag}=[matrix of math nodes, row sep=3em, column sep=3em, text height=1.5ex, text depth=0.25ex,inner sep=0.5em]
\tikzstyle{arrow above}=[transform canvas={yshift=0.5ex}]
\tikzstyle{arrow below}=[transform canvas={yshift=-0.5ex}]



\def\bR{\begin{color}{red}}
\def\bB{\begin{color}{blue}}
\def\bM{\begin{color}{magenta}}
\def\bC{\begin{color}{cyan}}
\def\bW{\begin{color}{white}}
\def\bBl{\begin{color}{black}}
\def\bG{\begin{color}{green}}
\def\bY{\begin{color}{yellow}}
\def\bDG{\begin{color}{green!70!black}}
\def\e{\end{color}}

\usepackage{wrapfig}
  {\gdef\scalefactor{1.0}\begin{center}\proofSkipAmount \leavevmode}%
  {\scalebox{\scalefactor}{\DisplayProof}\proofSkipAmount \end{center} }

\title{Encoding !-tensors as !-graphs with neighbourhood orders}
\author{David Quick
  \institute{University of Oxford}
  \email{david.quick@cs.ox.ac.uk}
}

\begin{document}
\maketitle

\begin{abstract}
  Diagrammatic reasoning using string diagrams provides an intuitive language for reasoning about morphisms in a symmetric monoidal category. To allow working with infinite families of string diagrams, !-graphs were introduced as a method to mark repeated structure inside a diagram. This led to !-graphs being implemented in the diagrammatic proof assistant Quantomatic. Having a partially automated program for rewriting diagrams has proven very useful, but being based on !-graphs, only commutative theories are allowed. An enriched abstract tensor notation, called !-tensors, has been used to formalise the notion of !-boxes in non-commutative structures. This work-in-progress paper presents a method to encode !-tensors as !-graphs with some additional structure. This will allow us to leverage the existing code from Quantomatic and quickly provide various tools for non-commutative diagrammatic reasoning.
\end{abstract}

\tableofcontents

\newpage

\section{Background}\label{sec:background}

\subsection{Motivation}\label{sub:motivation}

Reasoning using diagrams can often be simpler and more intuitive than using a term-based mathematical syntax~\cite{JS}. There has recently been a shift toward using graphical calculi in a number of different fields. The \textit{String graph} formalism is one which has far-reaching applications in areas including categorical quantum mechanics~\cite{CD1,ContPhys,CDKZ}, computational linguistics~\cite{DimitriDPhil} and control theory~\cite{SobocinskiSignal,Baez2014a}. String graphs are a formalism for typed directed graphs, with the idea of edges replaced by (directed) wires. Unlike edges, wires do not need to have vertices at the ends. Those without starting vertices can be considered inputs and those without end vertices can be considered outputs. Wires also allow for identity edges (neither end connected to a node) or loops (a wire connected to itself). String graphs achieve this by having wire-vertices along a wire, so that a loop for example could be two wire-vertices $a$ and $b$ with edges $a\rightarrow b$ and $b\rightarrow a$. String graphs can be combined by plugging outputs from one in to inputs of another.

A diagrammatic theory is a set of equations between string diagrams on some set of generating morphisms. For example, a (commutative) associative multiplication operation %
\beginpgfgraphicnamed{Induc-Multiply}
\begin{tikzpicture}[small]
	\begin{pgfonlayer}{nodelayer}
		\node [style=none] (0) at (0, 0.5) {};
		\node [style=none] (1) at (0.25, -0.5) {};
		\node [style=nodevert] (2) at (0, 0) {};
		\node [style=none] (3) at (-0.25, -0.5) {};
	\end{pgfonlayer}
	\begin{pgfonlayer}{edgelayer}
		\draw [style=directed] (1) to (2);
		\draw [style=directed] (3) to (2);
		\draw [style=directed] (2) to (0);
	\end{pgfonlayer}
\end{tikzpicture}}
\endpgfgraphicnamed with a unit %
\beginpgfgraphicnamed{Induc-Unit}
\begin{tikzpicture}[small]
	\begin{pgfonlayer}{nodelayer}
		\node [style=nodevert] (0) at (0, -0.25) {};
		\node [style=none] (1) at (0, 0.5) {};
	\end{pgfonlayer}
	\begin{pgfonlayer}{edgelayer}
		\draw [style=directed] (0) to (1);
	\end{pgfonlayer}
\end{tikzpicture}}
\endpgfgraphicnamed can be described by the string diagram equations:

\vspace{-5pt}
\ctikzfig{comm_string_eqns_eg}

An equation, $G=H$, between string graphs can be used in a form of rewriting where a subgraph matching $G$ can be replaced by $H$.

\vspace{-5pt}
\ctikzfig{rewriting_eg}

Many theories allow for nodes to have variable arity edges. For example, the multiplication and unit above can be replaced by a single operation taking the product of an arbitrary number of inputs (order unimportant), along with two diagram equations.

\vspace{-5pt}
\ctikzfig{arbit_eqns_eg}

This extension, while powerful, loses the mathematical rigour from the previous formalism (as can be seen from the use of ellipses). To return to a rigorous semantics, !-boxes (pronounced bang-boxes) were introduced in~\cite{DixonDuncan2009} to designate (by enclosing in a blue box) parts of a diagram which are allowed to be repeated. Adding !-boxes to string graphs results in the notion of a \textit{!-graph}. This was formalised in \cite{PatternGraph,MerryThesis} and has many applications, particularly in categorical quantum mechanics. Note that a !-graph represents a family of string graphs which can be recovered by two !-box operations, $\Kill_B$ and $\Exp_B$. $\Kill_B$ is the operation removing the !-box $B$ and all contents from a !-graph. $\Exp_B$ is the operation which creates a new copy of the contents of $B$ attached to the same nodes. Equations between !-graphs have the restriction that the !-graphs have the same !-boxes with the same nesting (!-boxes can contain other !-boxes). Hence the ellipsis equation in our example theory describes the following set:

\[
\left\llbracket %
\beginpgfgraphicnamed{box_eqn}
\begin{tikzpicture}
	\begin{pgfonlayer}{nodelayer}
		\node [style=nodevert] (0) at (-2.25, -0.5) {};
		\node [style=wirevert] (1) at (-0.75, -1.25) {};
		\node [style=wirevert] (2) at (-2.25, -1.25) {};
		\node [style=written] (3) at (0, 0) {$=$};
		\node [style=nodevert] (4) at (-1.5, 0.75) {};
		\node [style=wirevert] (5) at (-2, 0.25) {};
		\node [style=wirevert] (6) at (1.75, 1.5) {};
		\node [style=nodevert] (7) at (1.75, 0.5) {};
		\node [style=wirevert] (8) at (1, -1.25) {};
		\node [style=wirevert] (9) at (-1.5, 1.5) {};
		\node [style=wirevert] (10) at (2.5, -1.25) {};
		\node [style=none] (11) at (-2, -1.5) {};
		\node [style=none] (12) at (-2, -1) {};
		\node [style=bbox, label={$A$}] (13) at (-2.5, -1) {};
		\node [style=none] (14) at (-2.5, -1.5) {};
		\node [style=none] (15) at (-0.5, -1.5) {};
		\node [style=none] (16) at (-0.5, -1) {};
		\node [style=bbox, label={$B$}] (17) at (-1, -1) {};
		\node [style=none] (18) at (-1, -1.5) {};
		\node [style=none] (19) at (1.25, -1.5) {};
		\node [style=none] (20) at (1.25, -1) {};
		\node [style=bbox, label={$A$}] (21) at (0.75, -1) {};
		\node [style=none] (22) at (0.75, -1.5) {};
		\node [style=none] (23) at (2.75, -1.5) {};
		\node [style=none] (24) at (2.75, -1) {};
		\node [style=bbox, label={$B$}] (25) at (2.25, -1) {};
		\node [style=none] (26) at (2.25, -1.5) {};
	\end{pgfonlayer}
	\begin{pgfonlayer}{edgelayer}
		\draw [style=string] (4) to (9);
		\draw [style=string, in=-30, out=90, looseness=1.00] (1) to (4);
		\draw [style=string] (2) to (0);
		\draw [style=string, bend left=15, looseness=1.00] (0) to (5);
		\draw [style=string, bend left=15, looseness=1.00] (5) to (4);
		\draw [style=string] (7) to (6);
		\draw [style=string, in=-45, out=90, looseness=1.00] (10) to (7);
		\draw [style=string, in=-135, out=90, looseness=1.00] (8) to (7);
		\draw [style=boxedge] (12.center) to (13);
		\draw [style=boxedge] (13) to (14.center);
		\draw [style=boxedge] (14.center) to (11.center);
		\draw [style=boxedge] (11.center) to (12.center);
		\draw [style=boxedge] (16.center) to (17);
		\draw [style=boxedge] (17) to (18.center);
		\draw [style=boxedge] (18.center) to (15.center);
		\draw [style=boxedge] (15.center) to (16.center);
		\draw [style=boxedge] (20.center) to (21);
		\draw [style=boxedge] (21) to (22.center);
		\draw [style=boxedge] (22.center) to (19.center);
		\draw [style=boxedge] (19.center) to (20.center);
		\draw [style=boxedge] (24.center) to (25);
		\draw [style=boxedge] (25) to (26.center);
		\draw [style=boxedge] (26.center) to (23.center);
		\draw [style=boxedge] (23.center) to (24.center);
	\end{pgfonlayer}
\end{tikzpicture}}
\endpgfgraphicnamed \quad \right\rrbracket
\ :=\  \left\{ \ \ %
\beginpgfgraphicnamed{box_eqn_inst}
\begin{tikzpicture}
	\begin{pgfonlayer}{nodelayer}
		\node [style=nodevert] (0) at (-4.75, -0.75) {};
		\node [style=wirevert] (1) at (-3.75, -1.25) {};
		\node [style=written] (2) at (-3.25, 0) {$=$};
		\node [style=nodevert] (3) at (-4.25, 0.75) {};
		\node [style=wirevert] (4) at (-4.75, 0) {};
		\node [style=wirevert] (5) at (-2.5, 1.5) {};
		\node [style=nodevert] (6) at (-2.5, 0.5) {};
		\node [style=wirevert] (7) at (-4.25, 1.5) {};
		\node [style=written] (8) at (-1.75, -0.75) {$,$};
		\node [style=wirevert] (9) at (-2.5, -1.25) {};
		\node [style=written] (10) at (-7, 0) {$=$};
		\node [style=nodevert] (11) at (-7.75, 0.75) {};
		\node [style=wirevert] (12) at (-6.25, 1.5) {};
		\node [style=wirevert] (13) at (-7.75, 0) {};
		\node [style=wirevert] (14) at (-7.75, 1.5) {};
		\node [style=written] (15) at (-6, -0.75) {$,$};
		\node [style=nodevert] (16) at (-6.25, 0.5) {};
		\node [style=nodevert] (17) at (-7.75, -0.75) {};
		\node [style=wirevert] (18) at (1, -1.25) {};
		\node [style=written] (19) at (0.25, 0) {$=$};
		\node [style=wirevert] (20) at (-0.5, -1.25) {};
		\node [style=nodevert] (21) at (-0.5, 0.75) {};
		\node [style=wirevert] (22) at (1, 1.5) {};
		\node [style=wirevert] (23) at (-0.5, 0.25) {};
		\node [style=wirevert] (24) at (-0.5, 1.5) {};
		\node [style=written] (25) at (1.75, -0.75) {$,$};
		\node [style=nodevert] (26) at (1, 0.5) {};
		\node [style=nodevert] (27) at (-0.5, -0.5) {};
		\node [style=wirevert] (28) at (6.25, -1.25) {};
		\node [style=wirevert] (29) at (5.25, -1.25) {};
		\node [style=written] (30) at (4.75, 0) {$=$};
		\node [style=wirevert] (31) at (4.25, -1.25) {};
		\node [style=wirevert] (32) at (3, -1.25) {};
		\node [style=nodevert] (33) at (3.75, 0.75) {};
		\node [style=wirevert] (34) at (5.75, 1.5) {};
		\node [style=wirevert] (35) at (3.25, 0.25) {};
		\node [style=wirevert] (36) at (3.75, 1.5) {};
		\node [style=nodevert] (37) at (5.75, 0.5) {};
		\node [style=nodevert] (38) at (3, -0.5) {};
		\node [style=written] (39) at (7.25, -0.75) {$,$};
		\node [style=written] (40) at (8, -0.75) {$\ldots$};
	\end{pgfonlayer}
	\begin{pgfonlayer}{edgelayer}
		\draw [style=string] (3) to (7);
		\draw [style=string, in=-45, out=90, looseness=1.00] (1) to (3);
		\draw [style=string] (0) to (4);
		\draw [style=string, bend left, looseness=1.00] (4) to (3);
		\draw [style=string] (6) to (5);
		\draw [style=string] (9) to (6);
		\draw [style=string] (11) to (14);
		\draw [style=string] (17) to (13);
		\draw [style=string] (13) to (11);
		\draw [style=string] (16) to (12);
		\draw [style=string] (21) to (24);
		\draw [style=string] (20) to (27);
		\draw [style=string] (27) to (23);
		\draw [style=string] (23) to (21);
		\draw [style=string] (26) to (22);
		\draw [style=string] (18) to (26);
		\draw [style=string] (33) to (36);
		\draw [style=string, in=-60, out=90, looseness=1.00] (31) to (33);
		\draw [style=string] (32) to (38);
		\draw [style=string, bend left=15, looseness=1.00] (38) to (35);
		\draw [style=string, bend left=15, looseness=1.00] (35) to (33);
		\draw [style=string] (37) to (34);
		\draw [style=string, in=-60, out=90, looseness=1.00] (28) to (37);
		\draw [style=string, in=-120, out=90, looseness=1.00] (29) to (37);
	\end{pgfonlayer}
\end{tikzpicture}}
\endpgfgraphicnamed \ \ \right\} 
\]

As !-graphs get applied to more complicated systems, rewriting by hand becomes difficult. This problem was solved by the introduction of Quantomatic~\cite{Quantomatic}, an automated theorem prover which allows rewriting of !-graphs using substitution. Quantomatic encompasses many tools for automated reasoning with string diagrams and has the ability to output graphical derivations directly to LaTeX. 

\vspace{-8pt}
\begin{figure}[H]\label{text}
  \centering
  \setlength{\abovecaptionskip}{-4pt}
  \includegraphics[scale=0.45]{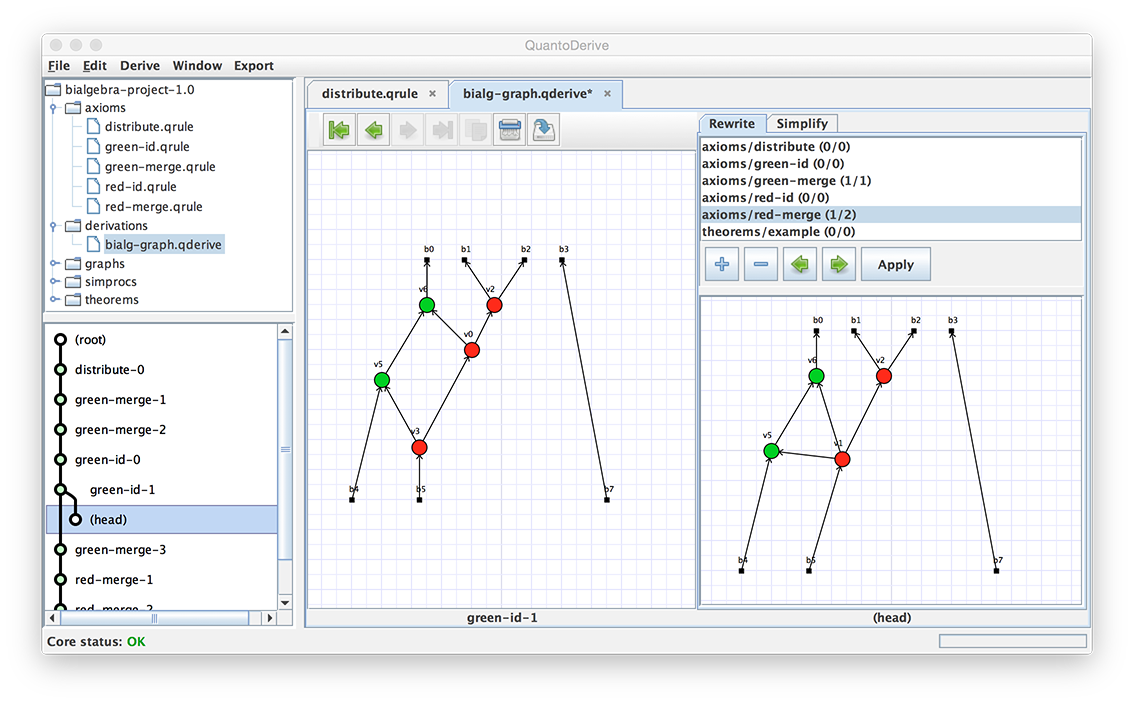}
  \caption{Example derivation in Quantomatic}
\end{figure}

One major drawback of both !-graphs and Quantomatic is that they are only designed to allow theories in which all nodes are commutative on inputs and outputs. This not only restricts the types of theory we can work with but more subtly, it rules out the option to definitionally extend a theory. i.e. defining new nodes (possibly inductively) as diagrams of other nodes. This is important if we want to rigorously replace finite arity nodes with their more intuitive arbitrary arity counterparts.

Since graphs inherently lack any notion of order on the edges surrounding a node, a more logical choice for representing non-commutative systems is a tensor notation in which edges are listed (in order) in a subscript. This tensor syntax was suggested in~\cite{KissingerATS} and then extended in~\cite{NoncommBB} to !-tensors which use an enriched version of Penrose's abstract tensor notation~\cite{Penrose1971} allowing non-commutative nodes and !-boxes. Here nodes are of the form $\phi_e$ where $\phi$ is the type and $e$ is called the edgeterm (describes order of edges). Nodes are enclosed in $[\_]^B$ to represent that they are inside a !-box named $B$. Edgeterms contain edges out $\+a$ and edges in $\<b$ where $a$ and $b$ are the names of the edges. If $\+a$ and $\<a$ appear in a !-tensor it represents an edge between their nodes. In the edgeterms anything inside $\rexp{\_}^B$ is an edge entering $B$ which should be expanded clockwise during $\Exp_B$ and $\lexp{\_}^B$ represents edges with anticlockwise expansion.

\vspace{-8pt}

\begin{equation}\label{eq:contraction}
  \psi_{\+f\bR\<a\<b\e}\phi_{\bR\+a\+b\e\<c\<d\<e}
  \quad := \quad
\beginpgfgraphicnamed{tensor_eg1}
\begin{tikzpicture}
	\begin{pgfonlayer}{nodelayer}
		\node [style=arbi] (0) at (0, -0.75) {$\phi$};
		\node [style=wire, label={right:$c$}] (1) at (0.75, -1.5) {};
		\node [style=wire, label={[yshift=5pt]right:$d$}] (2) at (0, -2) {};
		\node [style=wire, label={left:$e$}] (3) at (-0.75, -1.5) {};
		\node [style=arbi] (4) at (0, 1) {$\psi$};
		\node [style=wire, label={[yshift=-5pt]right:$f$}] (5) at (0, 2) {};
	\end{pgfonlayer}
	\begin{pgfonlayer}{edgelayer}
		\draw [style=directed] (1) to (0);
		\draw [style=directed] (2) to (0);
		\draw [style=directed] (3) to (0);
		\draw [style=directed, label=$a$, red, in=-135, out=45, looseness=1.00] (0) to node[left, pos=0.7]{$b$} (4);
		\draw [style=directed, label=$b$, red, in=-45, out=135, looseness=1.00] (0) to node[right, pos=0.7]{$a$} (4);
		\draw [style=directed] (4) to (5);
	\end{pgfonlayer}
\end{tikzpicture}}
\endpgfgraphicnamed
  \quad,\qquad \qquad
  [\phi_{\+a\<c\+b}\psi_{\+c\<d}]^B\xi_{\rexp{\<a}^B}\zeta_{\lexp{\<b\+d}^B\<e}
  \quad := \quad
\beginpgfgraphicnamed{tensor_eg2}
\begin{tikzpicture}
	\begin{pgfonlayer}{nodelayer}
		\node [style=arbi] (0) at (-0.5, 0.5) {$\phi$};
		\node [style=arbi] (1) at (1, 0.25) {$\psi$};
		\node [style=arbi] (2) at (-0.5, 2.25) {$\xi$};
		\node [style=arbi] (3) at (0, -1.5) {$\zeta$};
		\node [style=wire, label={right:$e$}] (4) at (0, -2.5) {};
		\node [style=bbox, label={$B$}] (5) at (-1.5, 1.25) {};
		\node [style=none] (6) at (1.75, -0.5) {};
		\node [style=none] (7) at (1.75, 1.25) {};
		\node [style=none] (8) at (-1.5, -0.5) {};
	\end{pgfonlayer}
	\begin{pgfonlayer}{edgelayer}
		\draw [style=directed, arcin={{}{0}{60}{1.5mm}{1}}] (0) to (2);
		\draw [style=directed, arcin={{}{25}{-110}{1.5mm}{1}}] (0) to (3);
		\draw [style=directed] (3) to (1);
		\draw [style=directed] (1) to (0);
		\draw [style=directed] (4) to (3);
		\draw [style=boxedge] (5) to (7.center);
		\draw [style=boxedge] (7.center) to (6.center);
		\draw [style=boxedge] (6.center) to (8.center);
		\draw [style=boxedge] (8.center) to (5);
	\end{pgfonlayer}
\end{tikzpicture}}
\endpgfgraphicnamed
\end{equation}

While it would be possible to write new automation tools based on term rewriting in the !-tensor formalism, this would require rebuilding the features already present in~\cite{Quantomatic} from scratch. Instead we wish to leverage this pre-existing code but extend it to work with the logic of non-commutative theories as formalised in~\cite{BangLogic}, by adding edgeterms to !-graphs. This paper shows work in progress demonstrating how we can fully encode !-tensors as !-graphs with added data. For the remainder of this section we recall the definitions of string graphs and !-graphs as defined in~\cite{MerryThesis}.

\subsection{String graphs as typed graphs}\label{sub:typed_string_graphs}

The standard definition of a graph in mathematics is not powerful enough for our purposes. As can be seen from the string graphs in Section~\ref{sub:motivation} our vertices need to be labelled (as generator types or wire-vertices). There should also be some restrictions as to which vertex types edges can connect. We achieve this using a typegraph, where every vertex gets mapped to a type.

\begin{definition}
  A graph $G$ along with a graph morphism $\tau:G\rightarrow \mathcal{G}$ is said to be a \textit{$\mathcal{G}$-typed graph}.
\end{definition}

\begin{definition}
  If $O$ and $M$ are sets and $\dom,\cod:M \rightarrow (O\times\{\bullet,\infty\})^*$ are functions into lists of pairs of elements of $O$ and either $\bullet$ or $\infty$, then $(O, M, \dom, \cod)$ is a \textit{compressed monoidal signature}.
\end{definition}

We refer to the set $O$ as the objects, the set $M$ as the morphisms and $\dom$ and $\cod$ assign the domain and codomain of each morphism allowing edges to be tagged as single ($\bullet$) or variable ($\infty$) arity.

\begin{definition}
  Given a compressed monoidal signature $T=(O,M,\dom,\cod)$, the \textit{derived compressed typegraph} $\mathcal{G}_T$ has vertices $O\cup M$ , a self-loop for every $X\in O$ and, for every $f\in M$,
  \begin{itemize}
    \item one $\bullet$-edge from $X$ to $f$ for each $(X, \bullet)$ in $\dom(f)$,
    \item one $\bullet$-edge from $f$ to $X$ for each $(X, \bullet)$ in $\cod(f)$,
    \item one $\infty$-edge from $X$ to $f$ for each $(X, \infty)$ in $\dom(f)$, and
    \item one $\infty$-edge from $f$ to $X$ for each $(X, \infty)$ in $\cod(f)$.
  \end{itemize}
\end{definition}

For example, the compressed monoidal signature $T$ with the two morphisms $f:[X^\infty]\rightarrow[X^\bullet]$ and $g:[Y^\bullet,X^\bullet]\rightarrow[X^\bullet]$ results in a typegraph $\mathcal{G}_{T}$ of the form:

\ctikzfig{typegraph_eg}

If a graph $G$ is $\mathcal{G}_T$-typed then we write $W(G)$ for the wire-vertices (those mapped into $O$) and $N(G)$ for the node-vertices (those mapped into $M$). $\bullet$-tagged edges represent edges of fixed (single) arity whereas $\infty$-tagged edges represent edges of variable arity. 

\begin{definition}
  A $\mathcal{G}_{T}$-typed graph $(G, \tau)$ is called a \textit{string graph} if $\tau$ is arity-matching (a bijection between the $\bullet$-edge neighbourhoods of $N$ and $\tau(N)$, for each $N\in N(G)$) and each wire-vertex in $G$ has at most one incoming edge and at most one outgoing edge.
\end{definition}

The notion of a string graph requires node vertices to be connected by chains of wire vertices which we refer to as wires. Wire points with no incoming edge represent an input to the graph and wire points with no outgoing edge represent outputs. We refer to the input vertices and output vertices together as the boundary. The set of wire-vertices in a wire are referred to as the interior of the wire.

\begin{remark}
  The edges around a node-vertex in a string graph are either $\bullet$-tagged, which means they are single arity; or they are $\infty$-tagged meaning there could be many copies made. This is important when we transition to !-graphs where we need to be careful not to copy single arity edges. 
\end{remark}

\subsection{!-Graphs}\label{sub:!graphs}

!-boxes denote subsections of a graph which can be repeated many times. We represent a !-box $B$ as a node (of the new type $!$) with an edge to each vertex contained in $B$.

\begin{definition}
  A subgraph $O$ of a string graph $G$ is said to be \textit{open} if it is not adjacent to any wire-vertices or incident to any fixed-arity edges.
\end{definition}

Openness (as described in \cite{MerryThesis}) encapsulates the property of being able to repeat that part of a graph an arbitrary number of times. It ensures adjacent edges are copied with nodes; fixed arity edges are not copied without the adjacent node; and that wires cannot be partially inside a !-box (to avoid wire splitting).

\begin{definition}
  Given a compressed monoidal signature $T$, the \textit{derived compressed !-typegraph} $\mathcal{G}_{T!}$ is $\mathcal{G}_{T}$ with the addition of a vertex $!$ along with edges from $!$ to every vertex (including itself).
\end{definition}

Our previous example of the compressed monoidal signature $T$ with morphisms $f:[X^\infty]\rightarrow[X^\bullet]$ and $g:[Y^\bullet,X^\bullet]\rightarrow[X^\bullet]$ results in a !-typegraph $\mathcal{G}_{T!}$ of the form:

\begin{equation}
\beginpgfgraphicnamed{btypegraph_eg}
\begin{tikzpicture}
	\begin{pgfonlayer}{nodelayer}
		\node [style=nodevert] (0) at (-2.5, 0) {$f$};
		\node [style=wirevert] (1) at (-0.5, 0) {};
		\node [style=nodevert] (2) at (1.5, 0) {$g$};
		\node [style=wirevert] (3) at (3.5, 0) {};
		\node [style=boxvert] (4) at (3, 1.25) {};
		\node [style=written] (5) at (-0.75, 0.5) {\tiny$X$};
		\node [style=written] (6) at (4, 0) {\tiny$Y$};
	\end{pgfonlayer}
	\begin{pgfonlayer}{edgelayer}
		\draw [style=string, bend right, looseness=1.00] (3) to node[above]{\tiny$\bullet$} (2);
		\draw [style=string, bend left, looseness=1.00] (2) to node[below]{\tiny$\bullet$} (1);
		\draw [style=string, bend left, looseness=1.00] (1) to node[below]{\tiny$\infty$} (0);
		\draw [style=string, bend left, looseness=1.00] (0) to node[above]{\tiny$\bullet$} (1);
		\draw [style=string, bend left, looseness=1.00] (1) to node[above]{\tiny$\bullet$} (2);
		\draw [style=boxstring, in=75, out=148, looseness=0.75] (4) to (0);
		\draw [style=boxstring, in=75, out=165, looseness=0.75] (4) to (1);
		\draw [style=boxstring, in=75, out=-165, looseness=1.00] (4) to (2);
		\draw [style=boxstring, in=90, out=-45, looseness=1.00] (4) to (3);
		\draw [style=boxstring, in=90, out=0, loop] (4) to ();
		\draw [style=string, in=-120, out=-60, loop] (1) to ();
		\draw [style=string, in=-120, out=-60, loop] (3) to ();
	\end{pgfonlayer}
\end{tikzpicture}}
\endpgfgraphicnamed
\end{equation}

If G is $\mathcal{G}_{T!}$-typed, write $!(G)$ for the box-vertices in $G$ (those mapped to $!$) and $U(G)$ for the full subgraph of $G$ with all vertices except $!(G)$. Given $B\in !(G)$ let $C(B)$ be the full subgraph of $G$ with nodes which have edges from $B$. So $C(B)$ represents the contents of a !-box $B$. We use this notation in the !-graph conditions to ensure !-boxes are well-behaved under !-box operations.

\begin{definition}\label{def:!-graph}
  A $\mathcal{G}_{T!}$-typed graph $G$ is called a \textit{!-graph} if the following hold:
  \begin{itemize}
  \setlength{\itemindent}{.25in}
    \item[BG1.] $U(G)$ is a $\mathcal{G}_{T}$-typed string graph;
    \item[BG2.] the full subgraph with vertices $!(G)$ is posetal;
    \item[BG3.] $\forall B\in !(G)$, $U(C(B))$ is an open subgraph of $U(G)$; and
    \item[BG4.] $\forall B,B'\in !(G)$, if $B'\in C(B)$ then $C(B')\subseteq C(B)$.
  \end{itemize}
\end{definition}

As an example, the theory described in Section~\ref{sub:motivation} has !-typegraph and !-graph rules as follows:

\ctikzfig{comm_string_eqns_eg_typed}

\section{!-Graphs with simple overlap and ordered neighbourhoods}\label{sec:graphs_so_no}

Given a compact closed signature $\Sigma$ we wish to encode the !-tensors of $\sgraphterms$ as !-graphs with some additional data. However, !-graphs allow for two or more non-nested !-boxes to share nodes, which is not permitted with !-tensors. In this section we define a restricted set of !-graphs and the additional structure recording non-commutativity.

\subsection{Simple overlap}\label{sub:simple_overlap}

We must now add the restriction to rule out !-graphs with overlap on nodes, we do this by introducing the idea of simple overlap of !-boxes.

\begin{definition}
  Given a pair of non-nested !-boxes $B$ and $B'$, we say they \textit{overlap simply} if $C(B)\cap C(B')$ consists of only the interior of zero or more wires, where at least one endpoint is a node-vertex and any node-vertex endpoints are in either $C(B)$ or $C(B')$.
\end{definition}

\begin{definition}
  A !-graph where any two non-nested !-boxes overlap simply is called a \textit{!-graph with simple overlap}.
\end{definition}

Compare this definition to !-graphs with trivial overlap (BGTO) which were first defined in \cite{bgto} where it was shown that they can be encoded using a context-free grammar. The difference here is that we allow non-nested overlap at a boundary rather than just between two node-vertices.

\ctikzfig{simple_overlap_eg}

The first two !-graphs have only simple overlap (though the second is non-trivial) the final two have non-simple overlap.

\subsection{Adding orders to neighbourhoods}\label{sub:adding_orders_to_neighbourhoods}

We now only need to describe the additional data to keep track of edge orders. For this we need the notion of an edgeterm which was originally described in \cite{NoncommBB}. Edgeterms list, in clockwise order, the edges (in $\ein a$ or out $\eout a$) around a node grouping those which expand clockwise under !-box $B$ inside $\rexp{\_}^B$ and those which expand anticlockwise inside $\lexp{\_}^B$. For examples see \eqref{eq:contraction} and \cite{NoncommBB}.

\begin{definition}\label{def:edgeterm}
The set of \textit{edgeterms} $\edgetermson{\mathcal{E}}$ over the edge names $\mathcal{E}$ and !-box names $\boxnames$ is defined inductively as follows:
  \begin{align*}
    \bullet\; &\epsilon \in \edgetermson{\mathcal{E}} && \text{(empty edgeterm)} \\
    \bullet\; & \ein a, \eout a \in \edgetermson{\mathcal{E}} && a \in \mathcal{E} \\
    \bullet\; &\lexp{e}^A,\rexp{e}^A \in \edgetermson{\mathcal{E}} && e\in\edgetermson{\mathcal{E}},\; A\in\boxnames \\
    \bullet\; & e f \in \edgetermson{\mathcal{E}} && e,f\in\edgetermson{\mathcal{E}}
  \end{align*}
    Two edgeterms are equivalent if one can be transformed into the other by:
  \begin{equation}
    e(fg) \equiv (ef)g \qquad
    \epsilon e \equiv e \equiv e \epsilon \qquad
    \rexp{\epsilon}^A \equiv \epsilon \equiv \lexp{\epsilon}^A
  \end{equation}
\end{definition}

\begin{definition}\label{def:nhd_order}
  Given a !-graph with simple overlap $G$, a \textit{neighbourhood order} on $G$ is a function $\nhd: N(G)\rightarrow \edgetermson{W(G)}$ satisfying $\forall N\in N(G)$:
  \begin{itemize}
    \item $\nhd(N)$ is an edgeterm with input edge names $\{w\in W(G): w\rightarrow N\}$ and output edge names $\{w\in W(G): N\rightarrow w\}$,
    \item The !-boxes with edges to $a\in\nhd(N)$ but not to $N$ are precisely those containing $a$ in $\nhd(N)$ (ordered by nesting order).
  \end{itemize}
\end{definition}

We define three !-box operations: $\Kill_B$ removes $B$ and its contents; $\Drop_B$ removes $B$ leaving its contents; and $\Copy_B$ add a new copy of $B$ and its contents. The operation $\Exp_B$ can be applied using $\Copy_B$ and then dropping the new !-box. 

\begin{definition}\label{def:operations}
  For $\Op_B$ any !-box operation, $\Op_B(G,\nhd):=(\Op_B(G),\Op_B\circ\nhd)$. Where our three !-box operations are applied to !-graphs by:
  \begin{itemize}
    \item $\Copy_B(G)$ is defined by a pushout of inclusions:
    \ctikzfig{copy_pushout}
    \item $\Drop_B(G):=G\backslash B$
    \item $\Kill_B(G):= G\backslash C(B)$
  \end{itemize}
  and $\Op_B$ is applied to an edgeterm in the following trivial cases:
  \begin{align*}
    \Op_B(\rexp{e}^A) &:= \rexp{\Op_B(e)}^A &
    \Op_B(e f) &:= \Op_B(e) \Op_B(f) \\
    \Op_B(\lexp{e}^A) &:= \lexp{\Op_B(e)}^A &
    \Op_B(x) &:= x
  \end{align*}
  where $A \neq B$ and $x \in \{ \ein a, \eout a, \epsilon \}$ and for the final two cases:
  \begin{align*}
    \Copy_B(\rexp{e}^B)  &:= \rexp{e}^B \rexp{\fr(e)}^{\fr(B)} &
    \Drop_B(\rexp{e}^B)  &:= e &
    \Kill_B(\rexp{e}^B) &:= \epsilon \\
    \Copy_B(\lexp{e}^B)  &:= \lexp{\fr(e)}^{\fr(B)} \lexp{e}^B &
    \Drop_B(\lexp{e}^B)  &:= e &
    \Kill_B(\lexp{e}^B) &:= \epsilon \\
  \end{align*}
\end{definition}

\section{Encoding !-tensors}\label{sec:encoding}

\subsection{Defining the mapping \texorpdfstring{$\I$}{I}}\label{sub:defining_the_mapping}

We wish to allow theories with generating morphisms which have input or output edges of single ($\bullet$-tagged) or variable ($\infty$-tagged) arity. Given a set $\mathcal O$ of object types and $X\in\mathcal O$ we write $\<X^\bullet,\+X^\bullet,\<X^\infty,\+X^\infty$ to represent fixed arity input and output edges and variable arity input and output edges respectively. We need a different definition of a signature when working with !-tensors.

\begin{definition}\label{def:!tensor_signature}
  A \textit{compact closed signature} $\Sigma$ consists of a set $\mathcal O := \{ X, Y, \ldots \}$ of object types and a set $\mathcal M$ of pairs $(\psi, w)$, where $w$ is a word in $\{\<X^\bullet,\+X^\bullet,\<X^\infty,\+X^\infty:X\in\mathcal O\}$.
\end{definition}

Recall the definition of a !-tensor expression and equivalence of !-tensors from~\cite{NoncommBB}.

\begin{definition}\label{def:!-tensors}
The set of all !-tensor expressions $\sgraphterms$ for a signature $\Sigma$ is defined recursively as:
  \begin{align*}
    \bullet\; &1, 1_{\+a\<b} \in \sgraphterms && a,b\in\edgenames\\
    \bullet\; &\phi_{e} \in \sgraphterms && e\in\edgetermson{\mathcal{E}}, \phi\in\Sigma \\
    \bullet\; &[G]^A \in \sgraphterms && G\in\sgraphterms, \; A\in\boxnames \\
    \bullet\; & G H \in \sgraphterms && G,H\in\sgraphterms
  \end{align*}
  Where $\mathcal E$ is a set of possible edge names and $\mathcal B$ is a set of possible !-box names. Subject to the conditions (F1) $\<a$ and $\+a$ must occur at most once for each edge name $a$ and (F2) $[\ldots]^A$ must occur at most once for each !-box name $A$, as well as some consistency conditions (C1)-(C3) for !-boxes~\cite{NoncommBB}.
\end{definition}

\begin{definition}\label{def:tensor-equiv}
  Two !-tensor expressions $G$ and $H$ are considered equivalent, $G \equiv H$, if one can be obtained from the other by replacing bound names and/or applying one or more of the following enforced identities:
  \[
    (GH)K \equiv G(HK) \qquad
    GH \equiv HG \qquad
    G1 \equiv G
  \]
  \[
    \Weak_{B\ni 1_{\+b\<a}}(G) \equiv G[\<b \mapsto \<a] \qquad
    \Weak_{B\ni 1_{\+a\<b}}(H) \equiv H[\+b \mapsto \+a]
  \]
  where for the last two identities $\< b$ and $\+ b$ are free in $G$ and $H$ respectively. Note $\Weak$ is inserting the identity wires inside the appropriate !-box $B$ such that the edge $b$ is well defined.
\end{definition}

\begin{definition}
  Given a compact closed signature $\Sigma=(\mathcal{O},\mathcal{M})$, we define the corresponding compressed monoidal signature $\I(\Sigma)$ to have objects $\mathcal{O}$, morphisms $\{\phi : (\phi,w)\in\mathcal{M}\}$, $\dom$ is defined to map $\phi$ to the input edges of $w$ for each $(\phi,w)\in\mathcal{M}$ and similarly $\cod$ maps $\phi$ to the output edges of $w$.
\end{definition}

For example, the !-tensor signature $\Sigma$ with objects $\{X,Y\}$ and morphisms $\{(f,\+X^\bullet\<X^\infty),(g,\+X^\bullet\<Y^\bullet\<X^\bullet)\}$ becomes the !-graph signature with the two morphisms $f:[X^\infty]\rightarrow[X^\bullet]$ and $g:[Y^\bullet,X^\bullet]\rightarrow[X^\bullet]$. This results in a !-typegraph $\mathcal{G}_{\I(\Sigma)!}$ of the form:

\ctikzfig{btypegraph_eg}

For the rest of this section we will suppose we are working in a fixed signature $\Sigma=(\mathcal{O},\mathcal{M})$.

The !-tensor formalism avoids the need to name nodes, but we will need to assign them with unique names if we want to convert to !-graphs. We do this using an indexing set. Say $G$ is a !-tensor and let $J$ be a set indexing the nodes (subexpressions of the form $\phi_e$, $1_{\+b\<a}$ or $1$) of $G$, write $N_j$ for the node corresponding to $j\in J$. In converting to a !-graph, the node $N_j$ can be labelled $j$.

We can define a function $\I$ taking !-tensors to what we will prove to be !-graphs with simple overlap and neighbourhood orders:

\begin{definition}
  $\I$ taking !-tensor expressions to !-graphs with neighbourhood orders is defined recursively:
  \begin{itemize}
    \item $\I(1):=\{\}$
    \item $\I(1_{\+b\<a}):=\{a\rightarrow b\}$
    \item $\I(\phi_e=N_j):=\{a\rightarrow j : \<a\in e\}\cup\{j\rightarrow a : \+a\in e\}\cup\{\bB B\rightarrow \e a : B\in\ectx_e(a)\}$
    \item $\I(GH):=\I(G)\cup\I(H)$
    \item $\I([G]^B):=\I(G)\cup\{\bB B\rightarrow \e x : x\in U(\I(G))\}\cup\{\bB B\rightarrow B'\e : B'\leq B\}$
  \end{itemize}
  The typing function on vertices maps $j\rightarrow \phi$ where $N_j=\phi_e$, !-boxes to $!$ and edge names to their edge type. The neighbourhood order is $nhd(j):=e$ for each $N_j=\phi_e$.
\end{definition}

So, for example, from the tensor expression $\phi_{\lexp{\<a}^A}[\psi_{\+a\rexp{\<b}^B}[1]^B]^A1_{\+d\<c}$ if we index the nodes from left to right using $\{1,2,3,4\}$ then we get the following after $\I$:

\vspace{-5pt}

\begin{align*}
  \I(N_1)&=%
\beginpgfgraphicnamed{string_eg_1}
\begin{tikzpicture}
	\begin{pgfonlayer}{nodelayer}
		\node [style=nodevert] (0) at (0, 0.5) {};
		\node [style=wirevert, label={right:$a$}] (1) at (0, -0.5) {};
		\node [style=boxvert, label={$A$}] (2) at (-0.75, 0.25) {};
	\end{pgfonlayer}
	\begin{pgfonlayer}{edgelayer}
		\draw [style=string] (1) to (0);
		\draw [style=boxstring] (2) to (1);
	\end{pgfonlayer}
\end{tikzpicture}}
\endpgfgraphicnamed &
  \I(N_2)&=%
\beginpgfgraphicnamed{string_eg_2}
\begin{tikzpicture}
	\begin{pgfonlayer}{nodelayer}
		\node [style=nodevert] (0) at (0, 0) {};
		\node [style=wirevert, label={right:$a$}] (1) at (0, 0.75) {};
		\node [style=wirevert, label={right:$b$}] (2) at (0, -0.75) {};
		\node [style=boxvertal, label={$B$}] (3) at (-0.75, -0.5) {};
	\end{pgfonlayer}
	\begin{pgfonlayer}{edgelayer}
		\draw [style=string] (2) to (0);
		\draw [style=string] (0) to (1);
		\draw [style=boxstring] (3) to (2);
	\end{pgfonlayer}
\end{tikzpicture}}
\endpgfgraphicnamed &
  \I(N_3)&=%
\beginpgfgraphicnamed{string_eg_3}
\begin{tikzpicture}
\end{tikzpicture}}
\endpgfgraphicnamed &
  \I(N_4)&=%
\beginpgfgraphicnamed{string_eg_4}
\begin{tikzpicture}
	\begin{pgfonlayer}{nodelayer}
		\node [style=wirevert, label={right:$c$}] (0) at (0, -0.75) {};
		\node [style=wirevert, label={right:$d$}] (1) at (0, 0.75) {};
	\end{pgfonlayer}
	\begin{pgfonlayer}{edgelayer}
		\draw [style=string] (0) to (1);
	\end{pgfonlayer}
\end{tikzpicture}}
\endpgfgraphicnamed
\end{align*}
Going through the recursive definition these combine to the !-graph we would expect:

\ctikzfig{string_eg_from_tensor}

\subsection{\texorpdfstring{$\I$}{I} is Well-Behaved}\label{sub:i_well_behaved}

We now check $\I$ behaves as desired so we can use this mapping to encode !-tensors in Quantomatic.

\begin{theorem}\label{thm:i_bgsono}
  Given $G\in\sgraphterms$, $\I(G)$ is a $\mathcal{G}_{\I(\Sigma)!}$-typed !-graph with simple overlap and a neighbourhood order.
\end{theorem}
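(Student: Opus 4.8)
The plan is to prove all four assertions---$\mathcal{G}_{\I(\Sigma)!}$-typedness, the !-graph axioms BG1--BG4, simple overlap, and validity of $\nhd$ as a neighbourhood order---simultaneously by structural induction on the !-tensor expression $G$, following the recursive clauses of Definition~\ref{def:!-tensors} in lockstep with the recursive definition of $\I$. Bundling the properties is essential because the inductive steps feed one another: the openness established when a !-box is built (the box clause) is exactly what is consumed when simple overlap between two non-nested !-boxes is verified (the union clause).

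\emph{Base cases.} For $\I(1)=\{\}$ every condition holds vacuously. For $\I(1_{\+b\<a})=\{a\to b\}$ we obtain a single wire with no node- or box-vertices, so BG1 holds (each wire-vertex has at most one in- and one out-edge), BG2--BG4 and simple overlap are vacuous, and $\nhd$ is the empty function. For a node $\I(\phi_e=N_j)$ the vertex typing is read off from $(\phi,w)\in\mathcal M$ via the construction of $\I(\Sigma)$, and arity-matching for BG1 follows from condition~F1, which forces each edge name to occur at most once and hence gives every surrounding wire-vertex the correct single in/out degree; there are no nested !-boxes to test. The assignment $\nhd(j):=e$ satisfies both clauses of Definition~\ref{def:nhd_order} essentially by construction, since $\I$ introduces $a\to j$ precisely for $\<a\in e$, $j\to a$ precisely for $\+a\in e$, and $B\to a$ precisely for $B\in\ectx_e(a)$, which are exactly the !-boxes enclosing the occurrence of $a$ in $e$ without containing $j$.

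\emph{Union clause.} Here $\I(GH)=\I(G)\cup\I(H)$, and conditions F1 and F2 guarantee that each edge name and each !-box name occurs at most once across the whole expression; consequently no node-vertex and no !-box is shared between the two factors, and the only gluing is the identification of a dangling output half-wire $\+a$ of one factor with the matching input half-wire $\<a$ of the other. This produces a legal wire and preserves the at-most-one-in/one-out property, so BG1 is maintained, while BG2--BG4 are inherited since the two box-posets are disjoint with neither side nesting into the other. The one genuine check is simple overlap between a box of $\I(G)$ and a box of $\I(H)$: using the inductive openness of each (BG3) together with F1/F2, their contents can meet only along such glued wire interiors, whose node endpoints lie in the respective boxes, which is exactly the simple-overlap condition. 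The two neighbourhood orders combine disjointly as no node is shared.

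\emph{Box clause, and the main obstacle.} For $\I([G]^B)$ no wire- or node-vertices are added, so BG1 is untouched; BG2 holds because $B$ is placed above every box of $\I(G)$ in the nesting poset; and BG4 holds because $C(B)$ contains all of $U(\I(G))$ and every inner box, hence the contents of each inner box. Since $B$ nests all pre-existing boxes, it is non-nested with none of them and the simple-overlap check for $B$ is vacuous within this clause, although its openness must still be recorded for later union steps. I expect the crux to be BG3: showing $U(C(B))$ is an \emph{open} subgraph of $U(\I(G))$, i.e.\ not adjacent to any wire-vertex outside it and incident to no fixed-arity edge crossing its boundary. This is precisely the role of the !-tensor consistency conditions (C1)--(C3), which are designed to rule out the partial-wire and dangling fixed-arity configurations that would break openness; the argument is to translate each of (C1)--(C3) into the corresponding geometric statement about the boundary of $B$ in $\I([G]^B)$. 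Finally, wrapping in $B$ adds edges $B\to x$ for all $x\in U(\I(G))$, so I must re-verify the second clause of Definition~\ref{def:nhd_order}: a box that touches a wire-vertex $a$ must count among the ``boxes with an edge to $a$ but not to $N$'' exactly when $a$ genuinely sits in a $\rexp{e}^B$ or $\lexp{e}^B$ context, and this agreement between edgeterm bracketing and actual box membership is once again guaranteed by (C1)--(C3).
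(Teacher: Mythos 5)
Your proposal reaches the same conclusion via the same underlying facts as the paper, but packages them differently: you run an explicit structural induction over the clauses of Definition~\ref{def:!-tensors}, carrying all of BG1--BG4, simple overlap and the neighbourhood-order conditions as a joint induction hypothesis, whereas the paper verifies each condition directly on the finished graph $\I(G)$ by ``provenance'' arguments (asking which clause of $\I$ could have produced a given edge). The two are close to interchangeable --- your openness argument for the box clause and the paper's BG3 argument both reduce to the observation that a box edge to a node-vertex is always accompanied by box edges to all of its wire neighbours, and both simple-overlap arguments reduce to ``two non-nested boxes sharing a node would have to be nested.'' What your version buys is a cleaner place to invoke the consistency conditions (C1)--(C3); what the paper's version buys is that it never has to reason about intermediate stages $\I(G')$ for proper subexpressions $G'$, which are not in general well-formed !-graphs. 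That last point is where your write-up has its one real inaccuracy: in the union clause you assert that F2 prevents any !-box from being shared between $\I(G)$ and $\I(H)$. F2 only forbids a second occurrence of $[\ldots]^B$; a box name $B$ declared in one factor can still be \emph{referenced} in an edge context $\rexp{\<a}^B$ of a node in the other factor, so the box-vertex $B$ does occur in both $\I(G)$ and $\I(H)$ and is identified by the union. This does not sink the argument (nesting edges between box-vertices are only ever created by the $\I([\,\cdot\,]^B)$ clause, so BG2 and BG4 survive, and the consistency conditions force the other endpoint of such a wire to lie inside $[\ldots]^B$, which is exactly what rescues simple overlap), but the inheritance step for BG2--BG4 and the ``only gluing is along half-wires'' claim need to be restated to account for these shared box-vertices.
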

\begin{proof}
    Let $J$ index the nodes in $G$.
  \begin{itemize}
    \item We first go through the four conditions of Definition~\ref{def:!-graph} to check we have a $\mathcal{G}_{\I(\Sigma)!}$-typed !-graph:
    \begin{enumerate}
      \item[BG1.]
      Take $(\phi,w)$ a morphism in $\I(\Sigma)$. The typegraph $\mathcal{G}_{\I(\Sigma)}$ contains a node $\phi$ with $\bullet$-edges to/from the $\bullet$-tagged edges in $w$. Given $\phi_e=N_j$, the $\bullet$-edge neighbourhood of $j$ is the finite arity edges in $e$ which are the same as $\bullet$-tagged edges in $w$. Hence we have a bijection. \\
      From the definition of $\I$, wire vertices come directly from edge names. The unique (possible) occurrence of $\+a$ results in a wire $\_\rightarrow a$ and the unique (possible) occurrence of $\<a$ results in a wire $a\rightarrow \_$. Hence each $a$ must have at most one incoming and one outgoing edge.
      \item[BG2.]
      The full subgraph $!(G)$ is the reflexive, transitive closure of the `child of' relation on !-boxes. Hence it is reflexive, transitive and antisymmetric (ensuring `child of' is cycle-free).
      \item[BG3.]
      Take $B\in !(G)$ and write $X$ for $U(C(B))$, so we need to show that $X$ is an open subgraph of $U(G)$. We first show that any wire-vertex adjacent to a vertex in $X$ is in $X$. From the definition of $\I$, the only time an edge is added from a box-vertex to a node-vertex $j$, there are also edges added to each neighbour of $j$. Any two adjacent wire-vertices must come from $\I(1_{\+a\<b})$, so that $\bB B\rightarrow \e a\Leftrightarrow \bB B\rightarrow \e b$. \\
      Incident edges can only come from node-vertices not in $X$ with adjacent wire-vertices in $X$. Hence, the wire comes from a directed edge with $B$ in its edge context which means it is $\infty$-tagged in the typegraph.
      \item[BG4.]
      An edge between box-vertices, $\bB A\rightarrow B\e$ must be added by $\I([H]^A)$ during the recursive definition, where $B$ and $C(B)$ are already in $\I(H)$. Hence we get edges $\bB A \rightarrow \e x$ for all $x\in C(B)$ meaning $C(B)\subseteq C(B')$.
    \end{enumerate}
    \item Next we check that any overlap of non-nested !-boxes $B,B'$ is simple.
    \subitem Suppose $j$ is a node-vertex such that $\bB B\rightarrow \e j$ and $\bB B'\rightarrow \e j$. From the definition of $\I$ we see that $B$ and $B'$ must appear in the node context of $N_j$ and so one is nested inside the other. This contradiction proves that $C(B)\cap C(B')$ contains only wire vertices
    \subitem Suppose $a$ is a wire-vertex such that $\bB B\rightarrow \e a$ and $\bB B'\rightarrow \e a$. If $a$ is adjacent to another wire-vertex $b$ then this edge must have come from $1_{\+b\<a}$ or $1_{\+a\<b}$ so $a$ and $b$ are nested inside the same !-boxes. Hence the intersection $C(B)\cap C(B')$ contains only the interiors of zero or more wires.
    \subitem Suppose both endpoints are wire-vertices. Hence the edges have come from identity wires $1_{\+b\<a}$ which must have $B$ and $B'$ in their node context which would require them to be nested.
    \subitem Suppose a wire-vertex $a\in C(B)\cap C(B')$ is adjacent to the node-vertex $j$. If $j$ was not in $C(B)$ nor $C(B')$ then both $B$ and $B'$ must occur in the edge context of $a$ and hence would be nested. We conclude that $j$ must occur in exactly one of $C(B)$ or $C(B')$.
    \item Finally we wish to check that $\nhd$ is a neighbourhood order on $\I(G)$. For $j\in N(\I(G))$ the incoming edges $a$ to $j$ come from inputs $\<a$ and the outgoing edges $b$ come from outputs $\+b$ as required. Also, for the node $N_j=\phi_e$, !-boxes with edges to $a\in\ectx{e}$ but not to $j$ are exactly those which appear in $\ectx_e(a)=\ectx_{\nhd(\phi)}(a)$.
  \end{itemize}
\end{proof}

Hence $\I$ takes a !-tensor expression and returns a correctly typed !-graph with a neighbourhood order. By the following theorem, the definition of $\I$ can be lifted from specific !-tensor expressions to !-tensors (equivalence classes of !-tensor expressions).

\begin{theorem}
  $\forall G,H\in\sgraphterms$, $G\equiv H$ if and only if $\I(G)$ and $\I(H)$ are equivalent up to renaming and wire homeomorphisms.
\end{theorem}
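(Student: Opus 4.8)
The plan is to prove the two implications separately, exploiting that $\equiv$ is generated by the short list of moves in Definition~\ref{def:tensor-equiv}. For the forward direction it suffices to check that $\I$ sends each generating move to a composite of renamings and wire homeomorphisms. Associativity $(GH)K\equiv G(HK)$ and commutativity $GH\equiv HG$ are immediate, since $\I$ is defined on products by the (associative and commutative) union of graphs, so once the indexing set $J$ is fixed both sides yield literally the same graph; likewise $\I(G1)=\I(G)\cup\I(1)=\I(G)$ settles the unit law, and renaming of bound edge- and box-names on the tensor side is exactly renaming of the corresponding wire- and box-vertices. The only move requiring real work is $\Weak$: inserting an identity $1_{\+b\<a}$ inside a !-box $B$ and contracting. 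Under $\I$ the identity contributes a single edge $a\rightarrow b$, so $\Weak_{B\ni 1_{\+b\<a}}(G)$ differs from $G[\<b\mapsto\<a]$ only by one degree-two wire-vertex inserted into the interior of an existing wire, lying inside exactly the !-box $B$. I would verify, using the openness condition BG3, that this new vertex sits inside the same !-boxes as its neighbours, so the move is precisely a wire homeomorphism respecting box membership and stays within the class of !-graphs with simple overlap.

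For the converse I would construct a read-off map $\mathcal J$ from !-graphs with simple overlap and neighbourhood order back to $\sgraphterms$: assign to each node-vertex $j$ of type $\phi$ the atom $\phi_{\nhd(j)}$, wrap it in the !-boxes containing $j$ in nesting order, assign $1_{\+b\<a}$ to each identity wire, and take the product of all atoms. The grouping data $\rexp{\_}^B$ and $\lexp{\_}^B$ inside $\nhd(j)$ records exactly which boxes contain an adjacent wire-vertex but not $j$, so $\mathcal J$ reproduces a legitimate !-tensor expression. I would then show (a) $\mathcal J(\I(G))\equiv G$ for every expression $G$, by induction on the structure of $G$, reading the unit, atoms, products and !-boxes off directly; and (b) $\mathcal J$ is invariant under renaming and wire homeomorphism up to $\equiv$, where renaming yields a renamed and hence $\equiv$ expression, while inserting or deleting an interior wire-vertex corresponds to a $\Weak$ move (or, for a fresh name, to renaming together with the unit law). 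Combining (a) and (b), if $\I(G)$ and $\I(H)$ are equivalent up to renaming and wire homeomorphism then $G\equiv\mathcal J(\I(G))\equiv\mathcal J(\I(H))\equiv H$.

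The hard part will be the bookkeeping in the backward direction: showing that $\mathcal J$ is well defined up to $\equiv$ and genuinely inverts $\I$. I expect the delicate points to be, first, that reordering the product and re-choosing edge names along each wire alter the read-off only through the commutativity, associativity, unit and $\Weak$ identities, so that $\mathcal J$ descends to equivalence classes; and second, matching the $\rexp{\_}^B/\lexp{\_}^B$ nesting in $\nhd(j)$ with the clockwise/anticlockwise edge context of the graph, since this is where the non-commutative ordering information lives and must be preserved on the nose by any homeomorphism adjacent to a node. The edgeterm equivalences of Definition~\ref{def:edgeterm}, in particular $\rexp{\epsilon}^A\equiv\epsilon\equiv\lexp{\epsilon}^A$, will have to be reconciled with the trivial-wire cases so that empty groupings never spuriously record box membership. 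Once these correspondences are pinned down, both implications reduce to routine verification on the generators.
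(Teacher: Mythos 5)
Your forward direction is essentially the paper's: both arguments observe that associativity, commutativity, the unit law and renaming are absorbed by the (commutative, associative) union in the definition of $\I$, and that the two $\Weak$ identities are exactly wire homeomorphisms, since $\I(1_{\+b\<a})$ contributes a single degree-two wire-vertex. The backward direction is where you genuinely diverge. The paper argues directly on the graph side: it assumes $\I(G)$ and $\I(H)$ differ by a single wire homeomorphism $a\rightarrow b\rightarrow c$ versus $a\rightarrow c$, observes that two adjacent wire-vertices can only arise from an identity tensor $1_{\+b\<a}$, and hence pulls the homeomorphism back to $G\equiv\Weak_{B\ni 1_{\+b\<a}}(G')\equiv G'[\<b\mapsto\<a]\equiv H$; the general case follows by chaining single homeomorphisms and renamings. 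You instead build an explicit read-off map $\mathcal{J}$, prove $\mathcal{J}(\I(G))\equiv G$, and show $\mathcal{J}$ is invariant under the graph-side moves. Both routes work, and yours buys more: a well-defined $\mathcal{J}$ on all !-graphs with simple overlap and neighbourhood order is precisely what the paper's Conjecture in Section~\ref{sub:theory} (surjectivity of $\I$) requires, so your argument is a step toward that open problem. The cost is that you must verify $\mathcal{J}$ lands in $\sgraphterms$ at all, i.e.\ that the read-off expression satisfies conditions (F1), (F2) and (C1)--(C3) --- that is the hard content of the conjecture and is not needed for this theorem. For the theorem as stated you should restrict $\mathcal{J}$ to the image of $\I$ (and its closure under renaming and wire homeomorphism, which your claim (b) provides), where validity of the read-off is automatic; with that restriction your argument is complete and the paper's single-homeomorphism analysis appears inside your claim (b) as the case of inserting or deleting one interior wire-vertex.
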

\begin{proof}
  We prove first that all !-tensor equivalences from Definitions~\ref{def:edgeterm} and \ref{def:tensor-equiv} are preserved through $\I$. In both formalisms bound variables can be renamed freely, so we will not worry about names here.
  \begin{itemize}
    \item Since edgeterms are copied directly from !-tensors to !-graphs, the edgeterm equivalences are still preserved. The associativity, commutativity and identity equivalences on !-tensor expressions have no affect on the graphical formalism so these are also preserved. The only thing left to check are the two equivalences involving inserting identity wires $1_{\+b\<a}$. These two conditions come down to wire homeomorphism in the !-graph framework. First we are given that $\<b$ exists in $G$ and (for some !-box $B$) we look at $\Weak_{B\ni 1_{\+b\<a}}(G)$ under $\I$. This becomes a graph with edges $a\rightarrow b \rightarrow x$ for some $x$ and since $a$ and $b$ are both wire-vertices this is wire homeomorphic to $G[\<b \mapsto \<a]$. The other case is similar but with arrows reversed.
    \item For the other direction suppose $\I(G)$ and $\I(H)$ differ only by a single wire homeomorphism, so there exist wire-vertices $a,b$ and a vertex $c$ in $\I(G)$ with $a\rightarrow b\rightarrow c$, and $\I(H)$ is $\I(G)$ but with $a\rightarrow b\rightarrow c$ replaced with $a\rightarrow c$. From the definition of $\I$ we see that there exists some $G'$ and !-box $B$ such that $G=\Weak_{B\ni 1_{\+b\<a}}(G')$ (this is the only way two wire vertices can be connected) and also $\<b$ must exist in $G$. $H$ must be the same as $G'$ except the edge $\<b$ is replaced by the edge $\<a$.
  \begin{align*}
    G &\equiv \Weak_{B\ni 1_{\+b\<a}}(G') \\
      &\equiv G'[\<b \mapsto \<a] \\
      &\equiv H
  \end{align*} 
  \end{itemize}
\end{proof}

We have shown $\I$ is injective on !-tensors and hence is a bijection onto its image. We can hence take any !-tensor $G$ and work with it in Quantomatic in the form of $\I(G)$. To work as an encoding we would hope that !-box operations are equivalent in each formalism. Applying the !-box operation $\Op_B$ to $\I(G)$ and then returning to the !-tensor formalism should result in a !-tensor equivalent to $\Op_B(G)$. By the previous theorem we need only check that $\Op_B\circ\I=\I\circ\Op_B$ and it then follows that $\I^{-1}\circ\Op_B\circ\I=\Op_B$.

\begin{theorem}
  $\Op_B(\I(G))=\I(\Op_B(G))$ for any !-box operation $\Op_B$ and $G\in\sgraphterms$.
\end{theorem}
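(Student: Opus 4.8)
The plan is to reduce the claim to the three primitive operations $\Kill_B$, $\Drop_B$ and $\Copy_B$, from which the case of $\Exp_B=\Drop_{\fr(B)}\circ\Copy_B$ follows since both $\I$ and the operations compose, and then to verify $\Op_B\circ\I=\I\circ\Op_B$ for each primitive by structural induction on the !-tensor expression $G$. A first observation that removes half of the bookkeeping is that the neighbourhood orders on the two sides agree automatically. The map $\I$ copies each edgeterm $e$ verbatim into $\nhd$, and on both sides the edgeterm at a node is rewritten by the \emph{same} rules of Definition~\ref{def:operations} (on the graph side through $\nhd\mapsto\Op_B\circ\nhd$, on the tensor side by replacing $e$ with $\Op_B(e)$). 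Because fresh names may be chosen identically in the two formalisms, the copy function $\fr$ can be taken to act the same way on both, which is licensed by the renaming clause of the previous theorem; the copied nodes then carry $\fr(e)$ on both sides as well. It therefore suffices to show that the underlying $\mathcal{G}_{\I(\Sigma)!}$-typed graphs coincide.

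I would then run the induction on $G$. For the base cases $1$ and $1_{\+b\<a}$ there is no box structure for $\Op_B$ to act on and both sides are unchanged. For a node $\phi_e$, the only effect of $\Op_B$ is on the box-to-wire edges $\{B\rightarrow a : B\in\ectx_e(a)\}$, and one checks directly that deleting the vertex $B$, deleting $B$ together with $C(B)$, and adjoining a fresh copy reproduce under $\I$ exactly the edgeterm rewrites $\rexp{e}^B\mapsto e$, $\rexp{e}^B\mapsto\epsilon$ and $\rexp{e}^B\mapsto\rexp{e}^B\rexp{\fr(e)}^{\fr(B)}$ (and their $\lexp{\cdot}$ analogues). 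The composition clause $\I(GH)=\I(G)\cup\I(H)$ is handled by noting that each graph operation is compatible with the vertex-sharing union: $\Drop_B$ and $\Kill_B$ are vertex/edge deletions, which distribute over the union, and the $\Copy_B$ pushout restricts compatibly to each summand. Combined with $\Op_B(GH)=\Op_B(G)\,\Op_B(H)$ on the tensor side and the inductive hypothesis, this case closes. For a bracket $[G]^A$ with $A\neq B$ the operation leaves the box-$A$ edges untouched and commutes into the contents by hypothesis; the only point to confirm is that any vertices newly created inside $A$ by a $\Copy_B$ on a nested $B$ still receive edges from $A$, which holds because $\fr$ places the copies inside $A$ on the tensor side and the pushout keeps them in $C(A)$ on the graph side.

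The substantive case, and the step I expect to absorb most of the work, is the bracket $[G]^B$ carrying the name being operated on. For $\Drop_B$ we delete the box-vertex $B$, giving $\I([G]^B)\setminus B=\I(G)=\I(\Drop_B([G]^B))$; for $\Kill_B$ we delete $B$ together with $C(B)$, matching the erasure of the contents, the only care being that the boundary wire-vertices shared with external nodes are correctly accounted for by the simple-overlap analysis of Theorem~\ref{thm:i_bgsono}. The hard part is $\Copy_B$: I must show that the defining pushout of $\Copy_B(\I([G]^B))$ yields precisely $\I\big([G]^B[\fr(G)]^{\fr(B)}\big)$ together with the rewritten external edgeterms $\rexp{f}^B\mapsto\rexp{f}^B\rexp{\fr(f)}^{\fr(B)}$. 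The delicate points are (i) identifying which wire-vertices are interior to $C(B)$, and hence copied, versus which are attachment points shared with external nodes, and (ii) confirming that an external node acquires a new edge to each fresh copy $\fr(a)$ of an interior wire-vertex $a$ into which it fed, so that its graph neighbourhood matches the doubled edgeterm. Openness of $U(C(B))$ (condition BG3) is what makes the interior/boundary split well defined, since it forbids wires straddling the boundary in a way that the pushout could split. Finally, because the !-graph carries no intrinsic cyclic order, the clockwise-versus-anticlockwise placement of the fresh copy is recorded entirely by $\nhd$, already shown to transform correctly; so the remaining task is the single, largely combinatorial, verification that the pushout object is isomorphic over the fixed external vertices to $\I\big([G]^B[\fr(G)]^{\fr(B)}\big)$.
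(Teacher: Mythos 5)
Your proposal is correct and follows essentially the same route as the paper: a case analysis over the recursive definition of !-tensor expressions for each of the three primitive operations, with the only substantive case being the bracket $[G]^B$ carrying the operated-on name, where the $\Copy_B$ pushout is matched against $\I([G]^B[\fr(G)]^{\fr(B)}])$ and the edgeterm rewrites. You are in fact somewhat more explicit than the paper about the points it dismisses as trivial (the union case, and external nodes acquiring edges to freshly copied boundary wire-vertices), which is a reasonable elaboration rather than a different approach.
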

\begin{proof}
  This can be shown by case analysis on the recursive definition of $\sgraphterms$ going through each operation $\Copy_B$, $\Drop_B$, $\Kill_B$. Most cases are trivial, the interesting case is showing $\Op_B(\I([G]^B))=\I(\Op_B([G]^B))$ for each operation:\\
    $\Copy_B(\I([G]^B))$ is defined by a pushout of the inclusion $1\hookrightarrow \I([G]^B)$ with itself, so equals the disjoint union of two copies of $\I([G]^B)$. Hence (for $\rn$ a renaming function):
    \begin{align*}
       \Copy_B(\I([G]^B))&=\I([G]^B)\cup\I([\rn(G)]^{\rn(B)}) \\
       &=\I([G]^B[\rn(G)]^{\rn(B)}) \\
       &=\I(\Copy_B([G]^B)) \\
      \Drop_B(\I([G]^B))&=\Drop_B(\I(G)\cup\{\bB B\rightarrow \e x : x\in U(\I(G))\}\cup\{\bB B\rightarrow B'\e : B'\leq B\}) \\
      &=\I(G) \\
      &=\I(\Drop_B([G]^B)) \\
      \Kill_B(\I([G]^B))&=\Kill_B(\I(G)\cup\{\bB B\rightarrow \e x : x\in U(\I(G))\}\cup\{\bB B\rightarrow B'\e : B'\leq B\}) \\
      &=\{\} \\
      &=\I(1) \\
      &=\I(\Kill_B([G]^B))
     \end{align*}
\end{proof}

\begin{corollary}
  The concrete instances of $\I(G)$ are the concrete instances of $G$ with $\I$ applied to them.
\end{corollary}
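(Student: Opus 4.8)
The plan is to obtain this as a direct corollary of the preceding theorem ($\Op_B(\I(G))=\I(\Op_B(G))$) by iterating it along a sequence of !-box operations. Recall that a concrete instance of either a !-tensor or a !-graph is the result of applying a finite sequence of !-box operations $\Op_{B_k}\circ\cdots\circ\Op_{B_1}$ until no !-boxes (resp.\ box-vertices) remain. So I want to prove the set equality $\mathrm{Inst}(\I(G))=\I(\mathrm{Inst}(G))$, and the strategy is to push $\I$ through an entire operation sequence rather than a single operation.

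First I would record a bookkeeping fact about $\I$: it induces a bijection between the !-boxes of $G$ and the box-vertices of $\I(G)$. By construction, the only clause of the definition of $\I$ that creates a box-vertex $B$ is $\I([G]^B)$, which produces exactly one, and each $[\,\cdot\,]^A$ occurs at most once by condition (F2). Consequently $G$ has no !-boxes if and only if $\I(G)$ has no box-vertices. Next I would lift the single-step identity to composites by induction on the length $k$ of the operation sequence: the base case is trivial, and the inductive step applies the preceding theorem once together with the induction hypothesis, yielding $\Op_{B_k}\circ\cdots\circ\Op_{B_1}(\I(G))=\I(\Op_{B_k}\circ\cdots\circ\Op_{B_1}(G))$, all taken up to the renaming and wire-homeomorphism equivalence under which $\I$ was shown to be a bijection (needed because $\Copy_B$ introduces fresh names via $\fr$). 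This step is purely formal given the previous results.

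With these two facts the corollary follows by a double inclusion. For the forward inclusion, any concrete instance of $\I(G)$ has the form $\Op_{B_k}\circ\cdots\circ\Op_{B_1}(\I(G))$ with no box-vertices remaining; by the lifted identity it equals $\I(c)$ where $c:=\Op_{B_k}\circ\cdots\circ\Op_{B_1}(G)$, and since $\I(c)$ has no box-vertices the bijection above forces $c$ to have no !-boxes, so $c\in\mathrm{Inst}(G)$ and the instance is $\I(c)$. The reverse inclusion is symmetric: for $c=\Op_{B_k}\circ\cdots\circ\Op_{B_1}(G)\in\mathrm{Inst}(G)$, applying $\I$ and commuting it through the operations gives $\I(c)=\Op_{B_k}\circ\cdots\circ\Op_{B_1}(\I(G))$, which has no box-vertices and is therefore a concrete instance of $\I(G)$.

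The substantive content is not a calculation but the correspondence of the very notion of \emph{concrete instance} across the two formalisms, so the main obstacle I expect is confirming that the three operations $\Kill_B$, $\Drop_B$, $\Copy_B$ act compatibly on the !-box/box-vertex bijection, so that a sequence terminating in a box-free expression on one side terminates in a box-free graph on the other. This reduces exactly to the bijection established above together with the observation that each operation either removes or freshly duplicates the same box in both settings; beyond this, and the care required to keep everything up to the renaming equivalence of the injectivity theorem, no new computation is needed.
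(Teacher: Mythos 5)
Your proposal is correct and is exactly the argument the paper intends: the corollary is left as an immediate consequence of the preceding theorem, obtained by iterating $\Op_B\circ\I=\I\circ\Op_B$ along any sequence of !-box operations and noting that $\I$ puts the !-boxes of $G$ in bijection with the box-vertices of $\I(G)$, so box-free termination coincides on both sides. Your write-up simply makes explicit the bookkeeping (the box bijection and the up-to-renaming equivalence) that the paper leaves implicit.
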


\section{Future work}\label{sec:future_work}

\subsection{Theory}\label{sub:theory}

This paper demonstrates work in progress and there are a few things which need to be added to the theory. Firstly, we would like to show that $\I$ is a bijection on to the set of !-graphs with simple overlap and neighbourhood order. This would allow us to replace Theorem~\ref{thm:i_bgsono} with an if and only if statement:

\begin{conjecture}
  $\exists G\in\sgraphterms$ such that $\I(G)=(X,\nhd)$ if and only if all of the following hold:
  \begin{itemize}
    \item $X$ is a $\mathcal{G}_{\I(\Sigma)!}$-typed !-graph
    \item $X$ has simple overlap
    \item $\nhd$ is a neighbourhood order on $X$
  \end{itemize}
\end{conjecture}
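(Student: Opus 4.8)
The forward implication is precisely Theorem~\ref{thm:i_bgsono}, applied to the witnessing expression. For the converse I would build, from a triple $(X,\nhd)$ satisfying the three conditions, a !-tensor expression $G$ with $\I(G)=(X,\nhd)$ on the nose, by reversing the recursive clauses defining $\I$. First give every wire-vertex of $W(X)$ a distinct edge name. Each node-vertex $j\in N(X)$ of type $\phi$ then contributes a factor $\phi_{\nhd(j)}$, taking the edgeterm verbatim from the neighbourhood order, and each edge $a\to b$ between two wire-vertices contributes a factor $1_{\+b\<a}$, since identity wires are the only source of wire-to-wire edges under $\I$.

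The structural fact that makes the bracketing possible, and the essential role of simple overlap, is that for each node-vertex $j$ the set $\{B\in{!}(X):j\in C(B)\}$ is totally ordered by nesting: if two non-nested boxes $B,B'$ both contained the node-vertex $j$, then $j$ would be a node-vertex in $C(B)\cap C(B')$, contradicting the definition of simple overlap. Node-vertices therefore sit in linear chains of nested boxes, exactly as the syntactic nesting of brackets $[\,\cdot\,]^B$ requires, and condition BG4 guarantees these chains are compatible with the poset of Definition~\ref{def:!-graph}.

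I would then assemble $G$ by nesting brackets according to this poset, placing each node-factor $\phi_{\nhd(j)}$ inside exactly the boxes containing $j$; because products in $\sgraphterms$ are associative and commutative, the juxtaposition order is immaterial. Verifying $\I(G)=(X,\nhd)$ reduces to matching box-edges. A box-edge $B\to a$ to a wire-vertex can arise in $\I$ either from the bracket clause (when $a$ occurs textually inside $[\,\cdot\,]^B$) or from the edge-context clause of a node lying outside $B$ whose edgeterm places $a$ within $\rexp{\,\cdot\,}^B$ or $\lexp{\,\cdot\,}^B$. For a wire-vertex adjacent to a node $j$, the second condition of Definition~\ref{def:nhd_order} states that the boxes adjacent to $a$ but not to $j$ are exactly those recorded in the edge context of $\nhd(j)$; since the $j$-th factor uses $\nhd(j)$ verbatim, these boundary crossings are reproduced automatically.

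I expect the main obstacle to lie in the interior wire-vertices arising from identity wires, which are adjacent to no node and so carry no expansion-context data of their own. For a wire passing through several nested boxes I must choose a placement of the identity-wire factors into brackets so that the box-edge to each interior wire-vertex is reproduced exactly, with none spurious. This amounts to showing that the box-memberships recorded along each wire of $X$ vary in a properly nested fashion, so that the wire can be cut into segments each lying in a well-defined box — precisely the content of the !-tensor consistency conditions (C1)--(C3) of Definition~\ref{def:!-tensors}. The crux, and presumably the reason the statement is posed as a conjecture, is to prove that simple overlap together with the neighbourhood-order axioms is strong enough to force this global coherence, and then to confirm that the reconstructed $G$ satisfies (F1), (F2) and (C1)--(C3) while creating no box-edge beyond those of $X$.
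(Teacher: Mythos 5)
The paper does not prove this statement: it is stated explicitly as a conjecture in the ``Future work'' section, so there is no proof of record to compare yours against. What can be assessed is whether your sketch is sound as far as it goes and whether it correctly locates the open difficulty. The forward implication is indeed just Theorem~\ref{thm:i_bgsono}. For the converse, your reconstruction is the natural one (node-vertices become $\phi_{\nhd(j)}$ with the edgeterm taken verbatim, wire-to-wire edges become identity tensors, brackets are assembled from the !-box poset), and your observation that simple overlap forces the set of !-boxes containing a given node-vertex to be a chain is correct and is exactly the structural fact that makes a syntactic bracketing possible. You are also right that the genuine obstruction is the placement of identity-wire factors: interior wire-vertices carry no edgeterm data, so their box memberships are constrained only by openness (BG3) and simple overlap, and one must prove these constraints force a properly nested segmentation of each wire.

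Two further gaps you should flag if you pursue this. First, wires with no node-vertex endpoints at all (bare input-to-output wires, or circles) escape both the neighbourhood-order axioms and the simple-overlap condition as stated, since that condition only governs wires with at least one node-vertex endpoint; whatever coherence such wires need must be extracted from BG3/BG4 alone, and it is not obvious this suffices. Second, you claim $\I(G)=(X,\nhd)$ ``on the nose'', but $\I$ only produces wires whose interior length is determined by how many identity factors you insert, and the bracketing constraints may force a particular segmentation that does not match the wire-vertex count of $X$ exactly; you may only be able to land in the wire-homeomorphism class of $X$, which would require restating the conjecture (or normalising $X$) accordingly. Neither point invalidates the strategy, but both must be resolved before the converse is a theorem, which is presumably why the author leaves it open.
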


Related to this, we would like to explore the differences between simple overlap and the stricter property of trivial overlap of !-boxes~\cite{bgto}. 

Matching (finding occurrences of one !-graph inside another) and substitution (replacing matches with equivalent graphs) are the key parts of diagrammatic reasoning with !-graphs. Matching a string graph $G$ with a string graph $K$ means finding a monomorphism from $G$ in to $K$ showing that $G$ is isomorphic to a subgraph of $K$. For example, in a commutative theory the following is a string graph matching:
\ctikzfig{string_graph_matching}
This may not be the case if we are working with non-commutative nodes. If the top node-vertex in $G$ has edges ordered $ab$ and the top node-vertex in $K$ has edges ordered $yx$ then we can not find a matching. This can be seen from their !-tensor equivalents:
\ctikzfig{string_graph_matching_ceg}
Hence we hope to define a matching from $(G,\nhd_G)$ to $(K,\nhd_K)$ to be a !-graph matching $i:G\rightarrow K$ satisfying $i(\nhd_G(x))=\nhd_K(i(x))\ \forall x\in N(G)$. This is currently being looked into by the author, along with checking that substitution works without any changes.

Once matching and substitution are formalised we can start comparing the differences between rewriting using substitution in !-graphs with neighbourhood orders and !-logic as defined in~\cite{BangLogic}. Ideally we would like to demonstrate that any case of proof by substitution of !-graphs (with neighbourhood order) can be shown in !-logic for !-tensors and conversely that each rule in !-logic can be implemented using substitution.

\subsection{Implementation}\label{sub:implementation}

The key contribution of this paper is to demonstrate that !-tensors could be quickly implemented in a program such as Quantomatic by adding neighbourhood order data to node vertices. Hence an obvious piece of further work would be to actually implement this in Quantomatic. There are three key areas to work on here. First is to code the new neighbourhood data on the !-graph datatype. Secondly we should update the matching and substitution algorithms to respect neighbourhood orders. Finally the user needs to be able to interact with !-graphs in a new way allowing neighbourhood orders to be chosen. 

We add neighbourhood orders by recursively defining an edgeterm datatype (using the set of edge and !-box names) and then adding a node-vertex indexed table of edgeterms to the !-graph datatype. Code can be written to enforce the restrictions of Definition~\ref{def:nhd_order} whenever a !-graph is updated. The functions acting on !-graphs can then be extended to also work on the individual edgeterms. For example !-box operations should be defined to act on edgeterms as described in Definition~\ref{def:operations}. 

Once the theory has been checked, the matching and substitution algorithms can be updated to test orders on neighbourhoods are preserved. A naive approach to matching would be to look for all matches ignoring edgeterms and then to filter out those in which edgeterms do not match up. This would be easy to implement but not very efficient. More work is needed to find a better method, checking edgeterms as the matching algorithm searches. As for substitution, the requirement that two !-tensors in a !-tensor equation have compatible boundaries should make sure that substitution does not create any issues. 

We present a planned graphical user interface for !-tensors in Quantomatic in Appendix~\ref{sec:graphical_user_interface}.

Since non-commutativity is added to each node-vertex individually it may be possible for Quantomatic to have commutative and non-commutative nodes interacting in one diagram. This is something which needs to be explored further.

\newpage

\bibliographystyle{eptcs}
\bibliography{bibfile}

\newpage

\appendix

\section{Graphical user interface}\label{sec:graphical_user_interface}

Implementing the gui could require a large amount of work. The current method for drawing edges simply draws a line directly between nodes. This will need to be changed for non-commutative nodes where it becomes important where edges enter/leave a node. New edges will need to be curves defined by their source and target nodes and angles. We will add handles to each edge controlling which direction the edge should go. Figure~\ref{fig:edge_order} demonstrates what this may look like. We envisage Quantomatic choosing the angles initially and only letting the user drag one edge handle over another edge to reorder them. If the user wished to have more control, double clicking the handle would allow them to control the angle exactly. The user should also be able to choose the angle of the tick.

\begin{figure}[H]
\centering
\begin{minipage}{.5\textwidth}
  \centering
\beginpgfgraphicnamed{ncq_edge_order}
\begin{tikzpicture}
	\begin{pgfonlayer}{nodelayer}
		\node [style=arbi, scale=3, line width=1pt] (0) at (0, 0) {};
		\node [style=wire, scale=2] (1) at (2.75, -3) {};
		\node [style=wire, scale=2] (2) at (-0.75, 3.5) {};
		\node [style=wire, scale=2] (3) at (0.75, 3.5) {};
		\node [style=handle] (4) at (0, -2.5) {};
		\node [style=handle] (5) at (1.5, 2) {};
		\node [style=handle] (6) at (-1.5, 2) {};
		\node [style=handle] (7) at (-1.15, 0) {};
		\node [style=written] (8) at (0.5, 2.5) {};
		\node [style=written] (9) at (2.25, 1.25) {};
	\end{pgfonlayer}
	\begin{pgfonlayer}{edgelayer}
		\draw [style=directed, line width=1pt, in=-90, out=150, looseness=1.00] (1) to (0);
		\draw [style=handleedge] (0) to (6);
		\draw [style=handleedge] (0) to (5);
		\draw [style=handleedge] (0) to (4);
		\draw [style=movement, <->, bend left=15, looseness=1.00] (8) to (9);
		\draw [style=directed, line width=1pt, in=-90, out=60, looseness=1.00] (0) to (2);
		\draw [style=directed, line width=1pt, in=-90, out=120, looseness=1.00] (0) to (3);
	\end{pgfonlayer}
\end{tikzpicture}}
\endpgfgraphicnamed
  \caption{Reordering edges}
  \label{fig:edge_order}
\end{minipage}%
\begin{minipage}{.5\textwidth}
  \centering
\beginpgfgraphicnamed{ncq_arc_drawing_2}
\InputIfFileExists{ncq_arc_drawing_2.tikz}{}{\input{./figures/ncq_arc_drawing_2.tikz}}
\endpgfgraphicnamed
  \caption{Manipulating arcs}
  \label{fig:arc_drawing}
\end{minipage}
\end{figure}

Figure~\ref{fig:arc_drawing} demonstrates the second major interface modification. We need to draw and let the user interact with arcs over edges entering a !-box. Whenever the user draws an edge between a non-commutative node and a !-box, Quantomatic will automatically add an arc over the edge (grouping multiple edges if possible). To change the arcs the user should be able to click on any part of an arc and drag. This would start a new arc containing every edge the user drags over in the direction of dragging. Hence arcs over multiple edges can be split apart by drawing each single arc in its required direction.

\begin{figure}[h]
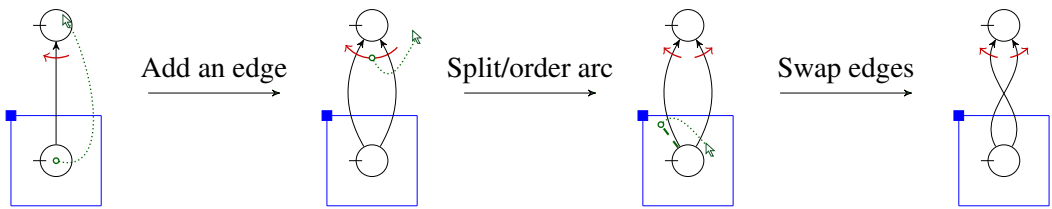
\label{fig:quanto_workflow}
  \ctikzfig{ncq_eg}
  \caption{An example workflow in Quantomatic}
\end{figure}

\end{document}